\documentclass[journal]{IEEEtran}

\usepackage{cite}
\usepackage{amsmath,amssymb,amsfonts,amsthm}
\usepackage[ruled,lined,linesnumbered]{algorithm2e}
\usepackage{graphicx,subfig}
\usepackage{enumitem}
\usepackage{tabularx,multirow,booktabs,makecell}

\def\BibTeX{{\rm B\kern-.05em{\sc i\kern-.025em b}\kern-.08em
    T\kern-.1667em\lower.7ex\hbox{E}\kern-.125emX}}
\SetCommentSty{textnormal}
\SetArgSty{textnormal}

\newcommand{\equref}[1]{(\ref{#1})}

\newcommand{\figref}[1]{Fig. \ref{#1}}
\newcommand{\algref}[1]{Algorithm \ref{#1}}

\newtheorem{proposition}{Proposition}

\begin{document}

\bstctlcite{BSTcontrol}
\title{Successive Cancellation Decoding For General Monotone Chain Polar Codes}
\author{Zichang~Ren, Chunhang~Zheng, Dou~Li, Yuping~Zhao}
\maketitle

\begin{abstract}
    Monotone chain polar codes generalize classical polar codes to multivariate settings, offering a flexible approach for achieving the entire admissible rate region in the distributed lossless coding problem. However, this flexibility also introduces significant challenges for existing successive cancellation (SC) based decoding schemes. Motivated by the need for a general SC decoding solution, we present a comprehensive decoding strategy for monotone chain polar codes that can handle arbitrary numbers of terminals, non-binary alphabets, and decoding along arbitrary monotone chains. Specifically, we formulate the SC decoding task as a series of inference subtasks over the polar transform and propose a computational graph framework based on probability propagation principles. This approach highlights the impact of variable switching during decoding and shows that time complexity varies between $O(N\log{N})$ and $O(N^2)$, depending on the specific chain structure. Moreover, we demonstrate that the widely used $O(N)$ space optimization is not universally applicable to monotone chain polar codes, which prompts us to introduce a constant-time decoder forking strategy based on the proposed logical computation graphs. This strategy enables time-efficient list decoding without relying on $O(N)$-space techniques. Numerical results verify the superior performance of the proposed scheme compared with the classical lazy-copy scheme.
\end{abstract}

\begin{IEEEkeywords}
    Polar codes, successive cancellation decoding, distributed lossless coding, chain rules.
\end{IEEEkeywords}

\section{Introduction}

Polar codes, introduced by Arıkan \cite{5075875}, achieve the capacity of binary-input discrete memoryless symmetric channels and have since inspired extensive research on polarization theory. In various classical scenarios, such as lossless \cite{5513567} and lossy \cite{5437372} source coding, as well as multiterminal settings \cite{6874846}, polar codes have been proven theoretically optimal.

Among these developments, distributed lossless source coding, also known as the Slepian-Wolf (SW) problem \cite{1055037}, has attracted significant attention due to its broad practical relevance. Prior studies \cite{5454148,5503184,6208869} explored the use of polar codes in multiple access channels, the channel-coding dual of the SW problem. Arıkan initially showed that corner points of the SW region can be achieved using polar codes with side information \cite{5513567}, and the full dominant face can be covered via time-sharing. In a subsequent work \cite{6284254}, he proposed a more insightful approach by generalizing polarization to multivariate chain rules of joint entropy. By exploring a class of monotone chains, he demonstrated that polar codes can be constructed to reach any point in the SW rate region by following different monotone chains. We refer to such constructions as \emph{monotone chain polar codes}.

These codes extend classical polar codes to multivariate sources. Many of their properties including construction, encoding, and polarization rate naturally follow from the classical case. However, decoding introduces new technical challenges. A practical decoding algorithm was proposed in \cite{6620401}, where the authors introduced four recursive calculation rules for the so-called coordinate channels, enabling low-complexity decoding based on conditional probability distributions.

Existing decoding strategies, however, leave several key issues unaddressed. Related works \cite{6620401,7282710} focus on a narrow class of chains similar to source-splitting schemes \cite{613189}, which represent only a small subset of all monotone chains. As a result, their conclusions lack generality. It can be problematic in certain cases. For instance, we find that not all monotone chain polar codes support the well-known $O(N)$ space optimization \cite{7055304}. For these chains, the decoder forking operation based on standard lazy-copy strategy requires $O(N)$ time, leading to an overall complexity of $O(N^2)$ instead of $O(N\log{N})$. This makes list decoding costly for moderate or large block lengths. We also find that, even for successive cancellation (SC) decoding, the widely known $O(N\log{N})$ time complexity is not always attainable for certain chains. Moreover, existing algorithms are limited to binary, two-terminal scenarios. Motivated by applications involving multiple terminals and non-binary alphabets \cite{quantum,10578043}, there is a clear need for a general and efficient decoding framework.

In this paper, we present the complexity analysis and implementation for SC decoding of general monotone chain polar codes. We formulate the SC decoding task as a sequence of inference tasks over polar transforms, and propose a computational graph framework for theoretical analysis. This framework enables the design of time-efficient SC and list decoding algorithms for arbitrary numbers of correlated discrete memoryless sources with arbitrary alphabets along arbitrary monotone chains. We show that the decoding time complexity ranges from $O(N\log{N})$ to $O(N^2)$, and both bounds are tight. We further demonstrate that the classical $O(N)$ space optimizations are neither universally applicable nor fundamentally necessary. Leveraging the proposed structure, we develop a constant-time decoder forking strategy that enables efficient list decoding without relying on $O(N)$ space techniques. Such refinements are essential for general monotone chains where conventional optimizations fail to apply.

\section{Preliminaries}

\subsection{Notation Conventions}

Throughout this paper, we adopt several notational conventions to enhance clarity. Random variables are denoted by uppercase letters, such as $X$, while their realizations in a particular experiment are represented by lowercase letters, such as $x$. Since we study multiple correlated sources and their repeated independent uses, superscripts are used to distinguish different sources, and subscripts indicate different temporal instances of the same source. For example, $X^1_1, X^1_2$ and $X^2_1, X^2_2$ denote two independent copies of the sources $X^1$ and $X^2$, respectively. Here, $X^1_1,X^2_1$ represent two different but correlated sources, and the same for $X^1_2,X^2_2$.

When a colon appears in a superscript or subscript, it denotes a sequence specified by its endpoints. For instance, we define $X_{1:N} \triangleq X_1,\ldots,X_N$ and $Y^{1:M} \triangleq Y^1,\ldots,Y^M$. To avoid ambiguity, we do not use colon notation simultaneously in both subscripts and superscripts. In particular, if a subscript or superscript is a set, it denotes the subsequence formed by the elements of that set in ascending order. For example, we define $x_{\mathcal{A}} \triangleq x_{a_1},\ldots,x_{a_{\lvert\mathcal{A}\rvert}}$, where $a_i \in \mathcal{A}$ are sorted in increasing order.

\subsection{Polarization and Polar Coding}

In our work, we focus solely on the source formulation of the SW problem. Accordingly, we begin with a brief review of source polarization and source polar coding introduced by \cite{5513567}.

Let $X_{1:N}$ consist of $N = 2^n$ independent copies of a binary memoryless source $X$, where $n \geq 1$. These variables undergo a linear transform:
\begin{equation}
    U_{1:N} = X_{1:N} \mathbf{G}_N,
\end{equation}
where $\mathbf{G}_N = \mathbf{G}_2^{\otimes n}$ is the $n$-th Kronecker power of the standard transform $\mathbf{G}_2 = \begin{pmatrix}1&0\\1&1\end{pmatrix}$. Given the invertibility of $\mathbf{G}_N$ and the chain rule for entropy expansion, we have:
\begin{equation}
    N\cdot H(X) = H(X_{1:N}) = \sum_{i=1}^{N}H(U_i|U_{1:i-1}).
\end{equation}

The phenomenon of source polarization refers to the fact that, for any $\delta \in (0,1)$, as $N \to \infty$, the proportion of indices with intermediate conditional entropy vanishes:
\begin{equation}
    \frac{\lvert \left\{ i : H(U_i|U_{1:i-1}) \in (\delta, 1-\delta) \right\} \rvert}{N} \to 0,
    \label{equ:classic_polar}
\end{equation}
which implies that the conditional entropies $H(U_i|U_{1:i-1})$ polarize toward $0$ or $1$ as the block length increases.

This polarization property enables compression of the source sequence $x_{1:N}$. One begins by selecting a high-entropy set $\mathcal{F} \subset \{1, \dots, N\}$, also referred to as the frozen set, with rate $R$ (i.e., $\lvert\mathcal{F}\rvert = \lceil NR \rceil$), such that $\forall i\in\mathcal{F}$ and $\forall j\not\in\mathcal{F}$, we have $H(U_i|U_{1:i-1}) \geq H(U_j|U_{1:j-1})$. The encoder then performs the polar transform $u_{1:N} = x_{1:N} \mathbf{G}_N$ and outputs the subsequence $u_{\mathcal{F}}$ as codeword.

The decoder estimates each bit sequentially by computing the conditional probabilities $\Pr(U_i = u \mid U_{1:i-1} = \hat{u}_{1:i-1})$ from $i=1$ to $N$, and applying the rule:
\begin{equation*}
    \hat{u}_i = \begin{cases}
        u_i, & \text{if } i \in \mathcal{F} \\
        0, & \text{if } \dfrac{\Pr(U_i=0 \mid U_{1:i-1} = \hat{u}_{1:i-1})}{\Pr(U_i=1 \mid U_{1:i-1} = \hat{u}_{1:i-1})} > 1 \\
        1, & \text{otherwise}
    \end{cases}
\end{equation*}
to make hard decisions. Once all $\hat{u}_{1:N}$ are obtained, the decoder then applies the inverse transform $\hat{x}_{1:N} = \hat{u}_{1:N} \mathbf{G}_N^{-1}$, which yields the reconstructed source sequence. Although decoding errors may occur (i.e., $x_{1:N} \neq \hat{x}_{1:N}$), it can be shown that if $R > H(X)$, the error probability vanishes as $N \to \infty$. Therefore, the scheme achieves the source entropy.

It is worth noting that the polarization behavior described in \eqref{equ:classic_polar} does not necessarily hold for non-binary sources. Nevertheless, prior work \cite{5351487,6283740} has been done to generalize polarization to non-binary alphabets. As observed in \cite{7456294}, it is always possible to induce polarization over non-binary sources by imposing a quasigroup operation on the input alphabet, while preserving nearly all desirable properties of binary polar codes, including polarization rate and encoding/decoding complexity. In this work, we adopt the random permutation strategy proposed in \cite{5351487} for general alphabets, with further details provided in later sections.

\subsection{SC List Decoding}

The theoretical optimality of polar codes under SC decoding only emerges at extremely large block lengths. In practical settings, their performance is limited due to insufficient polarization. To address this, numerous variants of the SC algorithm have been proposed \cite{7055304,stack,7094848,6560025}, many of which can be viewed as heuristic search strategies over the code tree \cite{6560025}.

Among these methods, SC list decoding \cite{7055304} stands out as the most successful and widely adopted. The key idea is to retain up to $L$ decoding candidates at each non-frozen bit, instead of committing to the most likely one. At the end of decoding, the candidate with the highest likelihood is selected as the final output. Owing to its excellent performance at short to moderate block lengths, SC list decoding has been extensively studied. Notable enhancements include integration with cyclic redundancy check \cite{6355936,6297420,7862172}, development of fast implementations \cite{7339671,8669947}, and theoretical analysis \cite{9770084}. The core idea has also inspired advances in related areas such as belief propagation list decoding \cite{8396299} and list decoding of polarization-adjusted convolutional codes \cite{9174118,9328621}. Without doubt, list decoding plays a central role in the practical adoption and standardization of polar codes.

When list decoding is used, computational efficiency becomes a key concern. A naive implementation, where each new decoder after forking copies all $O(N\log{N})$ internal data, results in an overall decoding complexity of $O(N^2\log{N})$, which is clearly impractical. To overcome this, a standard optimization was proposed in \cite{7055304} as follows:
\begin{itemize}
    \item Store all necessary data in categorized probability and bit arrays of size $O(N)$;
    \item Maintain a set of $O(\log{N})$ pointers to access these arrays;
    \item During each decoder forking operation, copy only the pointers rather than the data.
\end{itemize}

It is clear that the low complexity of standard list decoding heavily relies on the $O(\log{N})$-complexity lazy-copy operation under space optimization. Unfortunately, as we will show in this work, such optimization is not generally applicable for monotone chains and appears to be a fortunate consequence of the classical SC decoding order.

\subsection{Monotone Chain Polar Codes}

One of the theoretical foundations of source polarization is the chain rule of entropy, which can be extended to multivariate random variables. Consider $M$ correlated discrete memoryless sources $X^{1:M}$ with respective alphabet sizes $q_{\gamma}\geq 2,\gamma=1,\ldots,M$. Let $X^1_{1:N},\ldots,X^M_{1:N}$ be $N = 2^n$ independent copies of each source. Applying the polar transform to each source yields:
\begin{equation}
    U^\gamma_{1:N} = X^\gamma_{1:N} \mathbf{G}_N.
\end{equation}

Given the invertibility of $\mathbf{G}_N$ and the chain rule for joint entropy, we have:
\begin{equation}
    N\cdot H(X^{1:M}) = \sum_{t=1}^{MN}H(U^{\gamma_t}_{i_t} \mid U^{\gamma_1}_{i_1},\ldots,U^{\gamma_{t-1}}_{i_{t-1}}),
\end{equation}
where the sequence $U^{\gamma_1}_{i_1},\ldots,U^{\gamma_{MN}}_{i_{MN}}$ can be any permutation of the sequence $U^{1:M}_1,\ldots,U^{1:M}_{N}$.

In particular, \cite{6284254} considers a class of chains with monotonicity, where the natural order within the $N$ copies of each source is preserved. More precisely, for any $t < t'$, if $\gamma_t = \gamma_{t'}$, then it must hold that $i_t < i_{t'}$. This condition ensures that the decoding order is consistent with the natural polarization structure of the sources. It is worth noting that such monotonicity enables any monotone chain to be uniquely identified by its superscript sequence $\gamma_{1:MN}$. We will frequently employ this notation in the following.

For any given monotone chain $\gamma_{1:MN}$, we define its $k$-extension as:
\begin{equation*}
    \underbrace{\gamma_1,\ldots,\gamma_1}_{2^k},\ \ldots,\ \underbrace{\gamma_{MN},\ldots,\gamma_{MN}}_{2^k},
\end{equation*}
and it can be shown that, for any $\delta\in(0,1)$, as $k \to \infty$, the $k$-extension satisfies the following property :
\begin{equation}
    \frac{\left| \left\{ t : H(U^{\gamma_t}_{i_t} \mid U^{\gamma_1}_{i_1},\ldots,U^{\gamma_{t-1}}_{i_{t-1}}) \in (\delta, 1-\delta) \right\} \right|}{2^k MN} \to 0,
\end{equation}
where $t \in \{1, \ldots, 2^k MN\}$. Although this result is originally proved in \cite{6284254} for the simplest case $M = 2$, $q_1 = q_2 = 2$, it can be extended naturally to arbitrary numbers of terminals $M \geq 2$ with arbitrary alphabet sizes $q_{\gamma} \geq 2$ for $\gamma = 1,\ldots,M$ by leveraging the results in \cite{quantum,5351487,6283740}.

This polarization effect along monotone chains enables a distributed lossless source coding scheme analogous to that of \cite{5513567}, by transmitting only high-entropy symbols and discarding the low-entropy ones. Furthermore, different chains correspond to different rate allocations across terminals. It is also shown in \cite{6284254,quantum} that for any target point on the dominant face of the SW rate region, there exists a monotone chain whose induced rate vector approximates it arbitrarily closely. This allows the entire admissible region to be achieved without relying on time-sharing.

\section{Fundamentals and Complexity Analysis of SC Decoding}

The SC decoding problem for classical polar codes is conceptually simple and direct. It involves computing the following conditional probabilities from $t = 1$ to $N$:
\begin{equation*}
    \Pr(U_t | U_{1:t-1} = u_{1:t-1}).
\end{equation*}

As noted in \cite{5075875}, due to the recursive structure of the polar transform, the complexity of computing each probability is no more than $O(N)$. Hence, the total complexity is at most $O(N^2)$. If intermediate results from various decoding steps are shared, the total complexity can be further reduced to the efficient $O(N \log N)$.

\begin{figure}[t!]
    \centerline{\includegraphics[width=0.45\textwidth]{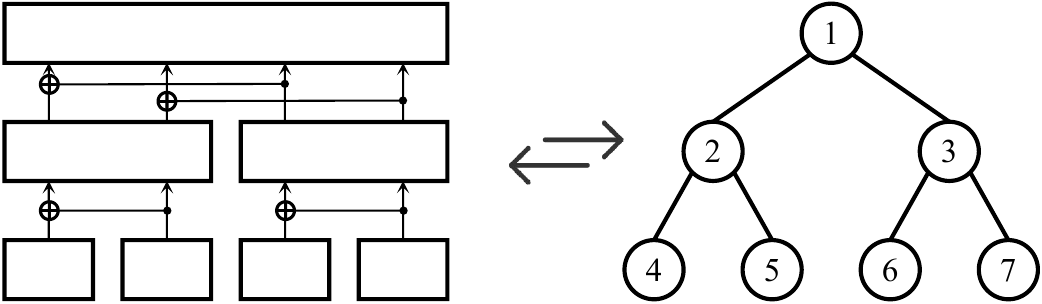}}
    \caption{Binary tree representation of the size-$2^2$ polar transform.}
    \label{fig:decTree}
\end{figure}

In relevant studies, this algorithm is often intuitively explained as a probability propagation process over a binary tree \cite{6852102}. Specifically, the polar transform belongs to the class of so-called butterfly transforms \cite{butterfly}, where a size-$N$ transform can be constructed by connecting two size-$N/2$ transforms with $N/2$ size-2 transforms. The corresponding relationship is shown in \figref{fig:decTree}. For a code of length $N = 2^n$, the decoding tree is a complete binary tree with $n + 1$ layers. The $j$-th layer contains $2^{j-1}$ nodes, each associated with a group of $N / 2^{j-1}$ multivariate random variables. Let the variable group at a parent node be $P^{1:M}_1, \ldots, P^{1:M}_{2l}$, and those at its left and right children be $L^{1:M}_1, \ldots, L^{1:M}_l$ and $R^{1:M}_1, \ldots, R^{1:M}_l$, respectively. Here, $l$ denotes the number of multivariate variables in each child group. Then, for all $i = 1, \ldots, l$ and $\gamma = 1, \ldots, M$, the following relations hold:
\begin{equation}
    P^\gamma_i = L^\gamma_i + R^\gamma_i,\quad P^\gamma_{i+l} = R^\gamma_i,
    \label{equ:polar_relation}
\end{equation}
which can be further refined according to \cite{5351487} to induce polarization in the non-binary case:
\begin{equation}
    P^\gamma_i = L^\gamma_i + R^\gamma_i,\quad P^\gamma_{i+l} = \pi(R^\gamma_i),
\end{equation}
where $\pi(\cdot)$ represents a random mapping over $\mathbb{Z}_{q_i}$. In practice, a fixed permutation may be used. Since this modification does not alter the core algorithmic structure, we focus on \equref{equ:polar_relation} in the following discussions.

In the SC decoding of classical polar codes, we encounter three fundamental subproblems:
\begin{enumerate}
    \item Given the distributions of $P_i$ and $P_{i+l}$, compute the distribution of $L_i$;
    \item Given the distributions of $P_i$ and $P_{i+l}$, along with the known value of $L_i$, compute the distribution of $R_i$;
    \item Given the known values of $L_i$ and $R_i$, determine the values of $P_i$ and $P_{i+l}$.
\end{enumerate}

These subproblems have been thoroughly studied. In the binary case, the first two are solved by:
\begin{equation*}
    \mathcal{L}_{L_i} = \frac{\mathcal{L}_{P_i} \cdot \mathcal{L}_{P_{i+l}} + 1}{\mathcal{L}_{P_i} + \mathcal{L}_{P_{i+l}}},\quad \mathcal{L}_{R_i} = \mathcal{L}_{P_i}^{1 - 2k} \cdot \mathcal{L}_{P_{i+l}},
\end{equation*}
where $\mathcal{L}$ denotes the likelihood ratio, e.g., $\mathcal{L}_{L_i} = \Pr(L_i = 0)/\Pr(L_i = 1)$, and $k$ is the known value of $L_i$. The third subproblem can be solved by direct algebraic computation.

In fact, if viewed from the binary tree representation of polar transforms, the above subproblems are actually special cases of the following generalized subproblems:
\begin{enumerate}
    \item Given the distributions of $P^{1:M}_i$, $P^{1:M}_{i+l}$, and $R^{1:M}_i$, compute the distribution of $L^{1:M}_i$;
    \item Given the distributions of $P^{1:M}_i$, $P^{1:M}_{i+l}$, and $L^{1:M}_i$, compute the distribution of $R^{1:M}_i$;
    \item Given the distributions of $L^{1:M}_i$ and $R^{1:M}_i$, compute the distributions of $P^{1:M}_i$ and $P^{1:M}_{i+l}$.
\end{enumerate}

For instance, considering $R_i$ is unknown in classical subproblem 1, we may assume it follows a uniform distribution. Similarly, when a variable is known deterministically, such as $L_i = k$, its distribution is a delta function: $\Pr(L_i = k) = 1$ and zero elsewhere.

In classical polar codes, there is no need to consider the general subproblems. Due to the so-called "partial order", the uncertainty of $L_i$ is always greater than that of $R_i$, so it is not possible to freeze $R_i$ without freezing $L_i$. However, this is not the case for monotone chain polar codes. In these codes, some components in $L^{1:M}_i$ and $R^{1:M}_i$ may have known values, meaning the simple partial order no longer holds. As a result, we need to consider more general cases.

\subsection{Tensor Representation and Basic Computations}

To express the subsequent calculation more concisely, we represent the joint distribution of a multivariate variable $X^{1:M}$ by an $M$-dimensional tensor $\mathcal{P}^{X^{1:M}}$, with entries:
\begin{equation}
    \mathcal{P}^{X^{1:M}}_{x_{1:M}} \triangleq \Pr(X^{1:M} = x_{1:M}).
\end{equation}

We define three basic tensor operations. The circular convolution is denoted by $\circledast$. If $\mathcal{P}^{X^{1:M}} = \mathcal{P}^{Y^{1:M}} \circledast \mathcal{P}^{Z^{1:M}}$, then:
\begin{equation}
    \mathcal{P}^{X^{1:M}}_{k_{1:M}} = \sum_{i_1=0}^{q_1 - 1} \cdots \sum_{i_M=0}^{q_M - 1} \mathcal{P}^{Y^{1:M}}_{k_{1:M} - i_{1:M}} \cdot \mathcal{P}^{Z^{1:M}}_{i_{1:M}},
    \label{equ:circonv}
\end{equation}
which computes the distribution of the sum of two multivariate random variables. Since modular addition is not self-inverse in general, we define a dual convolution $\circledast'$ such that:
\begin{equation}
    \mathcal{P}^{X^{1:M}}_{k_{1:M}} = \sum_{i_1=0}^{q_1 - 1} \cdots \sum_{i_M=0}^{q_M - 1} \mathcal{P}^{Y^{1:M}}_{k_{1:M} + i_{1:M}} \cdot \mathcal{P}^{Z^{1:M}}_{i_{1:M}},
    \label{equ:circonv_inv}
\end{equation}
to compute the distribution of the difference between multivariate variables. We also define the symbol $\odot$ to denote the normalized elementwise product. If $\mathcal{P}^{X^{1:M}} = \mathcal{P}^{Y^{1:M}} \odot \mathcal{P}^{Z^{1:M}}$, then:
\begin{equation}
    \mathcal{P}^{X^{1:M}}_{k_{1:M}} = \frac{\mathcal{P}^{Y^{1:M}}_{k_{1:M}} \cdot \mathcal{P}^{Z^{1:M}}_{k_{1:M}}}{\sum_{i_1=0}^{q_1 - 1} \cdots \sum_{i_M=0}^{q_M - 1} \mathcal{P}^{Y^{1:M}}_{i_{1:M}} \cdot \mathcal{P}^{Z^{1:M}}_{i_{1:M}}},
    \label{equ:nrmcomb}
\end{equation}
which is used to combine multiple independent observations of the same multivariate variable.

With these notations, the three subproblems can be expressed compactly. For subproblem 1, we have:
\begin{equation}
    \mathcal{P}^{L^{1:M}_i} = \mathcal{P}^{P^{1:M}_i} \circledast \left(\mathcal{P}^{P^{1:M}_{i+l}} \odot \mathcal{P}^{R^{1:M}_i}\right),
    \label{equ:calcLeft}
\end{equation}
and its batch computation for all $i=1,\ldots,l$ is referred to as the function $\texttt{calcLeft}()$ in related algorithms. While for subproblem 2, we compute:
\begin{equation}
    \mathcal{P}^{R^{1:M}_i} = \mathcal{P}^{P^{1:M}_{i+l}} \odot \left(\mathcal{P}^{L^{1:M}_i} \circledast' \mathcal{P}^{P^{1:M}_i}\right),
    \label{equ:calcRight}
\end{equation}
and the corresponding batch computation will be denoted as $\texttt{calcRight}()$. As for subproblem 3, the solution is:
\begin{equation}
    \begin{aligned}
        \mathcal{P}^{P^{1:M}_i} &= \mathcal{P}^{L^{1:M}_i} \circledast' \mathcal{P}^{R^{1:M}_i}, \\
        \mathcal{P}^{P^{1:M}_{i+l}} &= \mathcal{P}^{R^{1:M}_i},
    \end{aligned}
    \label{equ:calcParent}
\end{equation}
and the batch computation is denoted as $\texttt{calcParent}()$.

It is worth noting that the non-batched forms of equations \eqref{equ:calcLeft}, \eqref{equ:calcRight}, and \eqref{equ:calcParent} should be considered constant-time operations, as their complexity depends only on the base of joint distribution, which is a fixed parameter determined at the system design stage. A more detailed analysis reveals that the normalized combination operation in \eqref{equ:nrmcomb} has a traversal complexity, while the circular convolution operations in \eqref{equ:circonv} and \eqref{equ:circonv_inv} can be implemented using the fast Fourier transform.

\subsection{Inference over Polar Transforms}

With the above basic computational tools, we now describe the general solution for inference over polar transforms, which is based on the well-known probability propagation principles for graphical models. Note that every node in the decoding tree is either the root or a leaf of a three-node subtree, and these two cases correspond to different roles in the recursive procedures, as illustrated in \figref{fig:getAsFuncs}. Following the discussion in \cite{understand}, there are two types of message passing computations, corresponding to two recursive algorithms: $\texttt{getAsParent}()$ (\algref{alg:getAsParent}) and $\texttt{getAsChild}()$ (\algref{alg:getAsChild}).

Specifically, each node is indexed by an integer $\beta$, starting from the root $\beta = 1$ and proceeding top-down, left-to-right, as illustrated in \figref{fig:decTree}. The recursion terminates either at the root node (representing the source distribution or a conditional distribution with side information \cite{5513567}), or at a leaf node (corresponding to a known value or a uniform distribution). In the algorithms, the expression $\lfloor \beta / 2 \rfloor$ retrieves the parent node of the $\beta$-th node; $2\beta$ and $2\beta + 1$ access the left and right children of the $\beta$-th node, respectively; and the parity of $\beta$ is used to determine whether the node is a left or right child.

Returning to the SC decoding of monotone chain polar codes, the task corresponds to computing the following $MN$ conditional probabilities:
\begin{equation*}
    \Pr(U^{\gamma_t}_{i_t} \mid U^{\gamma_1}_{i_1}=u^{\gamma_1}_{i_1},\ldots,U^{\gamma_{t-1}}_{i_{t-1}}=u^{\gamma_{t-1}}_{i_{t-1}}),
\end{equation*}
which can be obtained by first computing the $MN$ joint distributions:
\begin{equation*}
    \Pr(U^{1:M}_{i_t} \mid U^{\gamma_1}_{i_1}=u^{\gamma_1}_{i_1},\ldots,U^{\gamma_{t-1}}_{i_{t-1}}=u^{\gamma_{t-1}}_{i_{t-1}}),
\end{equation*}
and then marginalizing over the $\gamma_t$-th component of the multivariate variable $U^{1:M}_{i_t}$.

\begin{figure}[t!]
    \centerline{\includegraphics[width=0.35\textwidth]{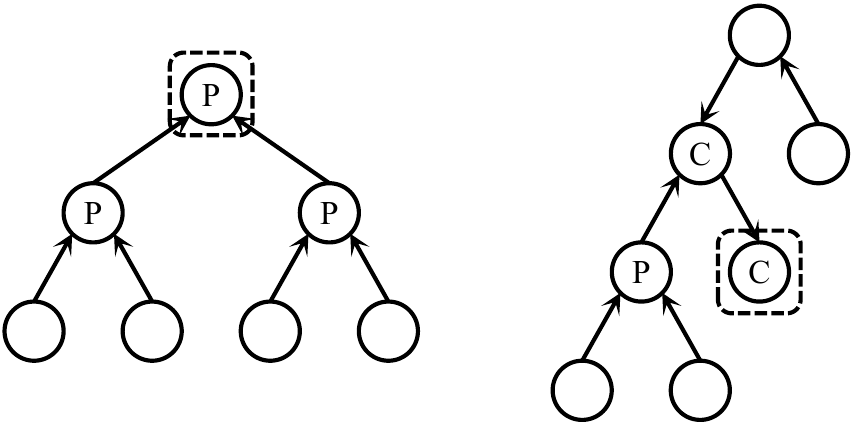}}
    \caption{Recursive structure of the two function calls (nodes labelled with P and C respectively).}
    \label{fig:getAsFuncs}
\end{figure}

\begin{algorithm}[t!]
    \caption{$\texttt{getAsParent}(\beta)$}
    \label{alg:getAsParent}
    \KwIn{node index: $\beta$}
    \KwOut{tensor array of the $\beta$-th node: \texttt{arr}}
    \If{$N \leq \beta \leq 2N - 1$}{
        \Return the tensor of this leaf node\;
    }
    Recursively get $\texttt{arr1} \gets \texttt{getAsParent}(2\beta)$\;
    Recursively get $\texttt{arr2} \gets \texttt{getAsParent}(2\beta + 1)$\;
    Compute $\texttt{arr} \gets \texttt{calcParent}(\texttt{arr1}, \texttt{arr2})$\;
\end{algorithm}

\begin{algorithm}[t!]
    \caption{$\texttt{getAsChild}(\beta)$}
    \label{alg:getAsChild}
    \KwIn{node index: $\beta$}
    \KwOut{tensor array of the $\beta$-th node: \texttt{arr}}
    \If{$\beta = 1$}{
        \Return the tensor array of root node\;
    }
    Recursively get $\texttt{arr1} \gets \texttt{getAsChild}(\lfloor \beta / 2 \rfloor)$\;
    \eIf{$\text{mod}(\beta, 2) = 0$}{
        Recursively get $\texttt{arr2} \gets \texttt{getAsParent}(\beta + 1)$\;
        Compute $\texttt{arr} \gets \texttt{calcLeft}(\texttt{arr1}, \texttt{arr2})$\;
    }{
        Recursively get $\texttt{arr2} \gets \texttt{getAsParent}(\beta - 1)$\;
        Compute $\texttt{arr} \gets \texttt{calcRight}(\texttt{arr1}, \texttt{arr2})$\;
    }
\end{algorithm}

Notably, at each decoding step, only a subset of the components within the multivariate variable is known, while the rest remain unknown. More precisely, before the $t$-th decoding step, for each multivariate variable $U^{1:M}_i$, only the following subset of component values is available:
\begin{equation}
    \mathcal{J}(t,i) \triangleq \{ \gamma_{t'} : t' < t,\ i_{t'} = i \}.
    \label{equ:partial_set}
\end{equation}

To accommodate this property, a natural choice is to perform \emph{partial decision}. Specifically, for each $i = 1,\ldots,N$, we set the probability tensor $\mathcal{P}^{U^{1:M}_i}$ at the corresponding leaf node to satisfy the following: for all $k_{\mathcal{J}(t,i)} \neq u_{\mathcal{J}(t,i)}$, we set $\mathcal{P}^{U^{1:M}_i}_{k_{1:M}} = 0$; and for all $k_{\mathcal{J}(t,i)} = u_{\mathcal{J}(t,i)}$ and $k'_{\mathcal{J}(t,i)} = u_{\mathcal{J}(t,i)}$, we ensure that $\mathcal{P}^{U^{1:M}_i}_{k_{1:M}} = \mathcal{P}^{U^{1:M}_i}_{k'_{1:M}}$.

Intuitively, such a distribution is \emph{partially uniform}, or equivalently, \emph{partially deterministic}: entries that contradict the known assignments are set to zero probability, while all remaining entries are assigned equal probability.

Under this setting, the joint distribution at step $t$ can be obtained by applying corresponding partial decision to each leaf node and then executing the function $\texttt{getAsChild}(\beta_t)$, where $\beta_t = i_t + N - 1$. The SC decoding task is formalized into a sequence of inference subtasks.

\subsection{Graph Representation of Computations}

\begin{figure}[t!]
    \centerline{\includegraphics[width=0.45\textwidth]{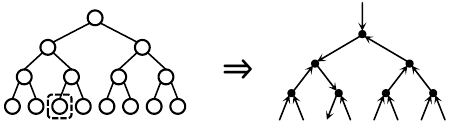}}
    \caption{Computational graph of the inference on $U^{1:M}_3$.}
    \label{fig:decGraph}
\end{figure}

From the above discussion, it is clear that depending on whether $\texttt{getAsParent}()$ or $\texttt{getAsChild}()$ is called, the probabilities associated with a tree node carries different meanings. To facilitate the subsequent discussion, we introduce the concept of the \emph{computational graph}, which provides an intuitive and structured representation of the underlying probability computations within each inference subtask.

An example is given in \figref{fig:decGraph} for a code of length $2^3$, where the decoding of a leaf node is mapped to a graph-based computation flow. Each computation is represented as a directed edge: calls to $\texttt{getAsParent}()$ and $\texttt{getAsChild}()$ correspond to upward and downward edges, respectively. These edges are connected through vertices according to the hierarchical structure of the original decoding binary tree. For a code of length $N = 2^n$, the computational graph consists of $2N-1$ edges and $N-1$ vertices.

It is important to emphasize a key distinction: although the computational graph resembles the decoding binary tree, the correspondence is not one-to-one. In the computational graph, each vertex corresponds to a three-node subtree in the original decoding tree, rather than a node representing a group of random variables. In fact, the entities associated with groups of random variables are the directed edges. At level $j$, each edge corresponds to a batch computation of complexity $O(N/2^j)$. A precise understanding of these edges and vertices is crucial for our proposed scheme.

\subsection{Time Complexity Analysis}

Returning to the inference algorithm, it is straightforward to verify that executing $\texttt{getAsChild}()$ on leaf nodes traverses all random variables exactly once, without redundancy or omission. The corresponding complexity is $O(N\log{N})$. If this procedure were executed independently for all $N$ leaves, the total complexity of SC decoding would be $O(N^2\log{N})$. In the absence of special structural properties, this complexity is essentially the best we can achieve. However, SC decoding does exhibit certain properties that enable further improvements. They can be categorized into two types:

\begin{itemize}
    \item \textbf{Intra-step sharing:} Among the $O(N\log{N})$ random variables, a considerable fraction may share identical probability distributions. This is most evident in undecoded nodes, all of which initially correspond to uniform distributions. Exploiting this property reduces the complexity of a single inference subtask to $O(N)$.
    \item \textbf{Inter-step sharing:} During the entire decoding process, different decoding steps share a substantial number of identical computations. Leveraging this property reduces the overall complexity of SC decoding.
\end{itemize}

In the classical work~\cite{5075875}, the final solution combines both strategies. In practice, however, the role of intra-step sharing is very limited. This is because the data subject to inter-step sharing always includes that of intra-step sharing. As a result, except for the first decoding step (where the complexity can be reduced from $O(N\log{N})$ to $O(N)$), subsequent steps remain unaffected. Since the overall complexity of SC decoding cannot be reduced below $O(N\log{N})$, we do not take intra-step sharing into consideration in subsequent analysis.

The total computational complexity of SC decoding is equivalent to the cumulative cost of all distinctive calls to $\texttt{getAsParent}()$ and $\texttt{getAsChild}()$ throughout the entire decoding process. We present the following propositions to provide a systematic description.

\begin{proposition}
    The time complexity of SC decoding for monotone chain polar codes is lower bounded by $O(N\log{N})$.
\end{proposition}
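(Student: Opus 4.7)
The plan is to establish the $\Omega(N\log N)$ lower bound within the computational graph framework introduced above, by accounting for the cumulative cost of all distinctive calls to $\texttt{getAsParent}()$ and $\texttt{getAsChild}()$ aggregated over the $N$ leaf inference subtasks required to produce the full decoded sequence. Throughout, I would treat the parameters $M$ and $q_{\gamma}$ as fixed constants, so that per-call batch costs scale purely with the number of multivariate variables at the node.

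The first step I would carry out is to show that every internal node of the decoding tree must be visited at least once across all inference subtasks. For each $i \in \{1, \ldots, N\}$, producing the joint distribution at leaf $\beta = i + N - 1$ requires invoking $\texttt{getAsChild}(\beta)$, which by \algref{alg:getAsChild} unrolls into a chain of $\texttt{getAsChild}$ calls along the entire root-to-leaf path, each triggering a $\texttt{getAsParent}$ call on the sibling subtree. Since every leaf must be decoded at some step $t$ (because every index $i$ appears in the chain $\gamma_{1:MN}$), every internal node of the tree lies on some root-to-leaf path, and hence incurs at least one distinctive $\texttt{getAsChild}$ call and at least one distinctive $\texttt{getAsParent}$ call across the full decoding process.

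The second step is to cost-account each such visit. At level $j$ of the tree (with the root at level $1$), each of the $2^{j-1}$ internal nodes handles a group of $N/2^{j-1}$ multivariate variables, so any batch call $\texttt{calcLeft}$, $\texttt{calcRight}$, or $\texttt{calcParent}$ at that node contributes $\Omega(N/2^{j-1})$ elementary tensor operations. Summing across the $2^{j-1}$ internal nodes at level $j$ gives $\Omega(N)$ per level, and summing across the $n = \log N$ internal levels yields the claimed $\Omega(N\log N)$ bound.

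The main obstacle I expect is justifying that the $2^{j-1}$ distinctive batch calls at a given level genuinely cannot be merged or amortized into fewer than $\Omega(N)$ total operations. This has to be argued from the fact that the polar transform produces variable groups whose distributions are in general distinct across different internal nodes and across different decoding steps, so the batched computations cannot be collapsed into a sub-linear-in-batch-size number of tensor operations. Once this is in place, the per-level bound is tight, the lower bound holds uniformly over all monotone chains (and in particular recovers the familiar bound for the classical natural chain as a special case), and the proposition follows.
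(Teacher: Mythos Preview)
Your proposal is correct and follows essentially the same approach as the paper: both argue that every node of the decoding tree must be visited at least once (since information must propagate from the root to every leaf), and then count the total number of multivariate variables across all nodes as $\sum_{j} 2^{j-1}\cdot N/2^{j-1} = \Theta(N\log N)$. Your level-by-level accounting and the explicit discussion of why distinct batch calls cannot be amortized are more detailed than the paper's brief justification, but the underlying idea is identical.
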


\begin{proof}
    This conclusion is obvious. By the end of a complete SC decoding process, all leaf nodes corresponding to the random variables $U^{1:M}_1,\ldots,U^{1:M}_N$ must transition from complete uncertainty to full determinism. Achieving this requires propagating the prior distribution from the root node to every leaf node, and the paths collectively cover all nodes in the binary tree without omission. It follows that the entire decoding tree, containing $O(N\log{N})$ random variables, must be traversed at least once. Therefore, the decoding process cannot be completed with a time complexity lower than $O(N\log{N})$.
\end{proof}

\begin{proposition}
    The time complexity of SC decoding for monotone chain polar codes is upper bounded by $O(N^2)$.
\end{proposition}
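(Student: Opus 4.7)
The plan is to partition the total work into a downward (backbone) contribution and an upward (sibling re-propagation) contribution, and to bound each by $O(N^2)$ separately. Intuitively, every arithmetic operation performed during decoding is either a \texttt{calcLeft}/\texttt{calcRight} call along the backbone of some \texttt{getAsChild}$(\beta_t)$ invocation, or a \texttt{calcParent} call used to (re)compute a sibling's \texttt{getAsParent} output after a leaf is updated.

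For the downward part, at each of the $MN$ decoding steps the backbone of \texttt{getAsChild}$(\beta_t)$ is a single root-to-leaf path in the decoding tree, contributing one edge at each level $j \in \{1, \ldots, n\}$ with cost $O(N/2^j)$. The geometric sum $\sum_{j=1}^n N/2^j = O(N)$ bounds the per-step backbone cost. Over all $MN$ steps, and using that $M$ is a fixed system parameter, the cumulative downward cost is $O(MN \cdot N) = O(N^2)$.

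For the upward part, I would maintain a cache of all \texttt{getAsParent} tensors and exploit the fact that \texttt{getAsParent}$(\beta)$ is a pure function of the leaf distributions in the subtree rooted at $\beta$; hence a cached tensor becomes stale only when some descendant leaf has its partial-decision set $\mathcal{J}(t,i)$ grow. Since $|\mathcal{J}(t,i)| \leq M$ for every $i$, the total number of such leaf updates across the whole chain is exactly $MN$. Re-propagating a single update up to the root again touches one edge at each of the $n$ levels, for another $O(N)$ cost per update. Summing over all leaf updates gives a cumulative upward cost of $O(MN \cdot N) = O(N^2)$, and adding the two contributions yields the claimed upper bound.

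The main obstacle is making the charging scheme airtight, so that every tensor read during a backbone traversal has already been paid for and nothing is counted twice. The cleanest way is to maintain the invariant that, just before step $t$, every cached \texttt{getAsParent} tensor is consistent with the current leaf configuration; re-establishing the invariant after each component is decoded at leaf $i_t$ is precisely the $O(N)$ upward re-propagation already charged to that leaf update. With the invariant in hand, the backbone traversal of \texttt{getAsChild}$(\beta_t)$ reads only fresh sibling caches and incurs no cost beyond the downward budget, so the two aggregate sums combine to give the $O(N^2)$ bound.
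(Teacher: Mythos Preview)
Your argument is correct and yields the stated $O(N^2)$ bound, but it is organized differently from the paper's proof. The paper does not split the work into a global downward budget and a global upward budget. Instead, it bounds the \emph{incremental} cost between two consecutive decoding steps $t$ and $t+1$: the only new information is the single value $u^{\gamma_t}_{i_t}$, so the fresh computation is exactly the propagation along the tree path connecting the leaves $i_t$ and $i_{t+1}$ (up from $i_t$ to their lowest common ancestor, then down to $i_{t+1}$). That path touches at most one edge per level, giving the same geometric sum $\sum_j O(N/2^j)=O(N)$ per step and hence $O(MN\cdot N)=O(N^2)$ overall.

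Your decomposition always re-propagates to the root and always recomputes the full root-to-leaf backbone, whereas the paper only charges the segment through the common ancestor. Both are $O(N)$ per step, so for the upper bound the approaches are equivalent. The paper's formulation is sharper, however, because the per-step cost it records actually depends on how far apart $i_t$ and $i_{t+1}$ are in the tree; this is precisely what drives the subsequent Propositions~3 and~4, where chains with small jumps achieve $O(N\log N)$ while chains with $\Theta(N)$ long jumps saturate $O(N^2)$. Your charging scheme, as written, assigns the full $O(N)$ to every step regardless of chain structure, so it would need to be tightened before it could recover the $O(N\log N)$ case.
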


\begin{proof}
    We observe that the difference in computational cost between any two consecutive decoding steps is at most $O(N)$. This stems directly from the structural properties of cancellation decoding. Specifically, from step $t$ to step $t+1$, the only newly introduced information is the value of a single variable, $U^{\gamma_t}_{i_t} = u^{\gamma_t}_{i_t}$, while all other variables remain unchanged. As a result, the two corresponding inference tasks share a significant amount of identical computation, with the only difference being the propagation of probability from $U^{\gamma_t}_{i_t}$ to $U^{\gamma_{t+1}}_{i_{t+1}}$. In the computational graph, this difference intuitively corresponds to the path between the $i_t$-th and $i_{t+1}$-th leaf nodes.

    Specifically, we illustrate the computational graph for two consecutive inference subtasks as shown in \figref{fig:optimTime}, where the dashed edges represent identical computations. Since the edge at level $j$ involves $O(N/2^j)$ random variables, the total cost along such a path is at most $O(N)$, as it may traverse all levels $j = 1, \ldots, \log_2{N}$ in the worst case. Repeating this process over $MN$ decoding steps results in an overall decoding complexity of at most $O(N^2)$.
\end{proof}

\begin{figure}[t!]
    \centerline{\includegraphics[width=0.4\textwidth]{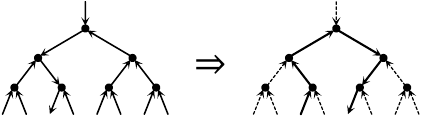}}
    \caption{Inter-step sharing between decoding steps $t$ and $t+1$, where $i_t=3$ and $i_{t+1}=5$.}
    \label{fig:optimTime}
\end{figure}

We further present two propositions demonstrating the tightness of the lower and upper bounds:

\begin{proposition}
    There exists a class of monotone chains whose SC decoding complexity is $O(N\log{N})$.
\end{proposition}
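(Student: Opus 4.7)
The plan is to exhibit one concrete family of monotone chains whose SC decoding cost reduces to a constant number of classical SC passes over the polar tree. A natural choice is the \emph{source-by-source} family
\begin{equation*}
    \gamma_{1:MN} = (\underbrace{\sigma(1),\ldots,\sigma(1)}_{N},\, \underbrace{\sigma(2),\ldots,\sigma(2)}_{N},\, \ldots,\, \underbrace{\sigma(M),\ldots,\sigma(M)}_{N}),
\end{equation*}
indexed by a permutation $\sigma$ of $\{1,\ldots,M\}$. First, I would split the $MN$ decoding steps into $M$ phases of length $N$; during phase $\gamma$ the decoder infers $U^{\sigma(\gamma)}_1,\ldots,U^{\sigma(\gamma)}_N$ in the natural leaf order, while the components decoded in earlier phases are absorbed into the partial decisions at every leaf.

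Next, I would argue that within each phase the computational-graph traversal coincides with Arıkan's classical SC traversal: leaves are visited in the order $1,2,\ldots,N$, and the only $\texttt{getAsParent}$/$\texttt{getAsChild}$ values invalidated between consecutive steps lie on the short tree path between the two current leaves. Consequently each edge of the computational graph is (re)evaluated at most once during a phase, and the per-phase cost is bounded by the sum of all edge weights in the graph, which is $O(N\log N)$. The multivariate dimension $M$ and the alphabet sizes $q_\gamma$ enter only through the constants, since the non-batched forms of $\texttt{calcLeft}$, $\texttt{calcRight}$, and $\texttt{calcParent}$ are treated as constant-time. At each phase boundary the decoder restarts at leaf $1$ with leaf priors updated by the just-finished phase; in the worst case this is a fresh classical SC pass with new priors, again costing $O(N\log N)$. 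Summing over the $M$ phases yields $O(MN\log N)=O(N\log N)$, since $M$ is a fixed constant of the problem.

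The main obstacle is formalizing the classical edge-accounting inside the computational-graph framework of Section~III. The key caching lemma to establish is that within one phase, updating a leaf's distribution after its decoding invalidates only those $\texttt{getAsParent}$ values lying on the tree path from that leaf up to the root, so most cached $\texttt{getAsChild}$ and $\texttt{getAsParent}$ values remain reusable. This follows from the recursive structure of Algorithms~\ref{alg:getAsParent} and~\ref{alg:getAsChild}, but turning it into a clean amortized bound of $O(N\log N)$ per phase is the bulk of the work.
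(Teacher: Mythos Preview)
Your proposal is correct and follows essentially the same route as the paper: exhibit the corner-point (source-by-source) chain, observe that it decomposes into $M$ classical SC passes with leaf indices running $1,\ldots,N$ each time, and conclude $O(MN\log N)=O(N\log N)$. The paper's proof is terser---it simply notes that each edge of the computational graph is updated exactly $2M$ times---so your ``at most once per phase'' should really be ``at most twice per phase'' (once downward, once upward), but this does not affect the bound.
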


\begin{proof}
    It suffices to provide one such class. Consider the monotone chain corresponding to a corner point of the SW rate region, given by:
    \begin{equation*}
        \gamma_{1:MN} = \underbrace{1,\ldots,1}_{N},\ldots,\underbrace{M,\ldots,M}_{N},
    \end{equation*}
    which decomposes the decoding task into $M$ classical SC decoding tasks. The corresponding decoding index $i_t$ increases sequentially from $1$ to $N$ and repeat $M$ times. As a result, each edge in the computation graph is updated exactly $2M$ times, leading to an overall complexity of $O(N\log{N})$.
\end{proof}

\begin{proposition}
    There exists a class of monotone chains whose SC decoding complexity reaches $O(N^2)$.
\end{proposition}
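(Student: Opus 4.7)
The plan is to exhibit an explicit two-terminal monotone chain ($M=2$) whose SC decoding cost is $\Omega(N^2)$, so that together with Proposition 2 the bound is tight. The chain I would use is
\[
    \gamma_{1:2N} = \underbrace{2,\ldots,2}_{N/2},\ \underbrace{1,2,1,2,\ldots,1,2}_{N},\ \underbrace{1,\ldots,1}_{N/2}.
\]
I would first verify monotonicity: source 2 is decoded at positions $1,\ldots,N/2$ in the prefix and then at positions $N/2+1,\ldots,N$ during the middle block, while source 1 is decoded at positions $1,\ldots,N/2$ in the middle block and then at positions $N/2+1,\ldots,N$ in the suffix; both orderings are increasing. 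I would then derive the induced leaf-index sequence $i_{1:2N}$ and highlight that in the middle block it reads
\[
    1,\ N/2+1,\ 2,\ N/2+2,\ \ldots,\ N/2,\ N,
\]
so every consecutive pair of accessed leaves sits in opposite halves of the decoding tree, forcing their least common ancestor to be the root.

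Next, I would reuse the transition-cost accounting from Proposition 2: the incremental work between decoding steps $t$ and $t+1$ is captured by the path joining leaves $i_t$ and $i_{t+1}$ in the computational graph. Because every middle-block pair has the root as its LCA, this path crosses every level of the tree, and summing the edge costs $O(N/2^{j-1})$ from $j=2$ up to $j=n+1$ (once for the upward \texttt{getAsParent} chain in the current subtree and once for the downward \texttt{getAsChild} chain in the opposite subtree) gives a per-transition cost of order $N$. To turn this into a lower bound I would isolate a single unavoidable contribution per transition: the level-2 downward edge (root to the subtree containing $i_{t+1}$) has batch cost $\Theta(N/2)$, and its sibling input is the \texttt{getAsParent} tensor of the opposite subtree that was just updated in the previous step. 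Hence no cached value remains valid, and each of the $N-1$ middle-block transitions must pay at least $\Omega(N)$. Summing yields $\Omega(N^2)$ for the middle block alone, while the prefix and suffix contribute only $O(N\log N)$; together with Proposition 2 this gives the matching $\Theta(N^2)$ bound.

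The hardest part is making the lower bound watertight within the computational-graph framework. The risk is that some cleverer inter-step sharing, beyond the simple tree-path accounting, could amortize the level-2 work across transitions. The argument has to rule this out by showing that in the alternating middle block every transition delivers genuinely new partial-decision information to a leaf on the opposite side of the root, so the corresponding root-to-child downward tensor has at least one changed input at every step. Once that invalidation is established, no caching scheme can avoid the full $\Theta(N/2)$ recomputation per transition, and the $\Omega(N^2)$ total follows.
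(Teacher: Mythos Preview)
Your proposal is correct and follows essentially the same approach as the paper: the paper exhibits the mirror-image chain $\gamma_{1:2N}=\underbrace{1,\ldots,1}_{N/2},\underbrace{1,2,\ldots,1,2}_{N},\underbrace{2,\ldots,2}_{N/2}$ (sources $1$ and $2$ swapped relative to yours), and argues that each of the $\Theta(N)$ alternations in the middle block forces the propagation path through the root, costing $\Theta(N)$ per transition and hence $\Theta(N^2)$ overall. Your cache-invalidation discussion in the final paragraph is more explicit than the paper's treatment, which simply asserts the per-switch cost without elaborating on why inter-step sharing cannot amortize it.
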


\begin{proof}
    Consider the following monotone chain for $M = 2$:
    \begin{equation*}
        \gamma_{1:2N} = \underbrace{1,\ldots,1}_{N/2},\underbrace{1,2,\ldots,1,2}_{N},\underbrace{2,\ldots,2}_{N/2},
    \end{equation*}
    which first decodes $U^1_{1:N/2}$ sequentially, then alternates between $U^1_{N/2+i}$ and $U^2_i$ for $i = 1,\ldots,N/2$, and finally completes the decoding of $U^2_{(N/2+1):N}$.

    The key aspect of this chain lies in its alternating region. Each switch occurs between the left and right subtrees of the root node, meaning that the probability propagation between $U^1_{N/2+i}$ and $U^2_i$ must always traverse the entire depth of the tree. Therefore, each switch incurs the maximum cost of $O(N)$, resulting in an overall complexity of at least $O(N^2)$. Recalling the upper bound established previously, this confirms that the total complexity is indeed $O(N^2)$.
\end{proof}

In summary, the low decoding complexity of classical polar codes is primarily due to the successive order from $i = 1$ to $N$. In contrast, general monotone chains often involve frequent jumps between indices $i_t \in \{1, \ldots, N\}$. The complexity overhead of each jump ranges from constant to $O(N)$, leading to an overall SC decoding complexity vary from $O(N\log{N})$ to $O(N^2)$.

\subsection{Space Complexity Analysis}

Thanks to the recursive structure of the polar transform, even without any specific memory optimization, the space complexity remains relatively low at $O(N\log{N})$. Moreover, in principle, it is also possible to reduce the space complexity to $O(N)$ in all cases. This can be achieved by replacing the recursive computation with an iterative procedure that processes the decoding tree layer by layer. Since each layer only requires $O(N)$ memory to store intermediate results, the total space usage can be significantly reduced. However, this approach has a major drawback: it only supports intra-step sharing, leading to an overall time complexity of $O(N^2)$.

A more practical optimization strategy was proposed in \cite{7055304}, which leverages the successive decoding order of classical polar codes. As illustrated in \figref{fig:optimSpace}, at decoding step $i_t = t$, the downward-directed edges form a single path from the root edge to the corresponding leaf edge. This effectively partitions the entire computational graph into two parts: deterministic part and uniform part. This is because the values of the previously decoded variables $U_{1:t-1}$ are already known, and the values of the future variables $U_{t+1:N}$ remain completely unknown, following from \equref{equ:calcParent} that deterministic inputs will produce deterministic outputs, and uniform inputs will yield uniform outputs.

Furthermore, due to the increasing decoding order $i_t = t$ of classical SC decoding, we can draw two key conclusions:
\begin{itemize}
    \item Certain edges in the deterministic subgraph will never be used again in future decoding steps and can therefore be safely released;
    \item Certain edges in the uniform subgraph will not be accessed until a specific future step, and thus do not need to be stored beforehand.
\end{itemize}

From the structure of the graph, it is clear that only the edges directly connected to downward edges, as well as the downward edges themselves, need to be retained. This corresponds to $O(N)$ probabilities and $O(\log{N})$ pointers, which is evident since at most two edges are kept at each layer. Among these retained edges, those located in the deterministic subgraph and the others respectively correspond to the bit arrays and probability arrays used in the memory-efficient SC decoder proposed in \cite{7055304}.

Based on our previous discussion, we know that the index sequence $i_{1:MN}$ of a general monotone chain may contain frequent jumps. As a result, most monotone chain polar codes cannot benefit from the aforementioned classical space optimization strategy. Nonetheless, this does not prevent list decoding algorithms from being efficiently implemented for such codes. In the next section, we will introduce a constant-complexity forking strategy that enables efficient list decoding without relying on lazy-copy method.

\begin{figure}[t!]
    \centerline{\includegraphics[width=0.4\textwidth]{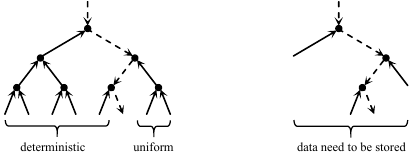}}
    \caption{Space optimization strategy adopted in classical SC decoding schemes.}
    \label{fig:optimSpace}
\end{figure}

\section{Proposed Time-Efficient Algorithms}

In this section, we present the implementation details of a series of algorithms that ultimately lead to a time-efficient SC list decoding method for general monotone chain polar codes.

\subsection{Time-Efficient SC Decoder}

\begin{figure*}[t!]
    \centerline{\includegraphics[width=\textwidth]{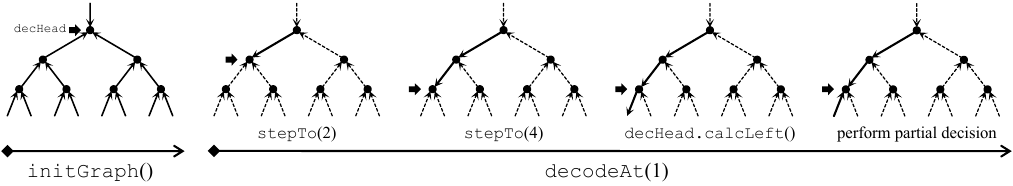}}
    \caption{Illustration of the guided traversal process.}
    \label{fig:walkProcess}
\end{figure*}

We begin by introducing a time-efficient SC decoder for general monotone chain polar codes, which builds upon the computational graph framework.

We first describe the implementation of the computational graph, which consists of two core data structures: \texttt{Edge} and \texttt{Vertex}. Each \texttt{Edge} includes a pointer \texttt{from} referencing its starting vertex, and a \texttt{data} array that stores the associated probability tensors. Each \texttt{Vertex} includes three pointers \texttt{parent}, \texttt{left}, and \texttt{right}, pointing to the corresponding edges. It also provides three batch-processing functions: $\texttt{calcLeft}()$, $\texttt{calcRight}()$, and $\texttt{calcParent}()$, as detailed in \equref{equ:calcLeft}, \equref{equ:calcRight}, and \equref{equ:calcParent}. These functions implicitly read from and write to the connected edges of this vertex.

To distinguish between different edges and vertices, we adopt a binary-tree-like indexing scheme for the computational graph. Indices start from $1$ at the top level and proceed level-by-level from left to right. This scheme provides convenient relationships, such as: for vertex $\beta$, its associated edges are indexed by $\beta$, $2\beta$, and $2\beta+1$; for edge $\beta$, the connected vertices are indexed by $\lfloor\beta/2\rfloor$ and $\beta$. Moreover, for decoding step $t$ at position $i_t$, the corresponding edge is $(i_t + N - 1)$, and its associated vertex is $\lfloor(i_t + N - 1)/2\rfloor$.

Returning to the decoding algorithm, the core idea is to model the SC decoding process as a guided traversal of a pointer \texttt{decHead} across the computational graph. An intuitive illustration is provided in \figref{fig:walkProcess}, which demonstrates the behavior of \algref{alg:initGraph} and \algref{alg:decodeAt}. Overall, the decoding process can be summarized as follows:

\begin{enumerate}
    \item Initialize the graph and the head pointer \texttt{decHead};
    \item Identify the target vertex for the next decoding step, and get the path starting from \texttt{decHead}. Then traverse the path with \texttt{decHead}, and update the corresponding edges along the path;
    \item Decode the target variable by hard decision or frozen value, then perform a partial decision;
    \item If decoding is not yet complete, return to step 2.
\end{enumerate}

In more detail, \algref{alg:decodeAt} consists of two main stages. In the updating stage, we move \texttt{decHead} along the path generated by \algref{alg:getPath} to the target vertex $\lfloor(i_t + N - 1)/2\rfloor$, which connects to the $i_t$-th leaf edge. During traversal, the relevant edges are updated using \algref{alg:stepTo}, and the target vertex finally invokes either \texttt{calcLeft}() or \texttt{calcRight}() to obtain the target conditional distribution:
\begin{equation*}
    \Pr(U^{1:M}_{i_t} \mid U^{\gamma_1}_{i_1} = u^{\gamma_1}_{i_1}, \ldots, U^{\gamma_{t-1}}_{i_{t-1}} = u^{\gamma_{t-1}}_{i_{t-1}}).
\end{equation*}

While for the decision stage, we update the $i_t$-th leaf edge with a partially deterministic tensor based on known frozen values or previously decoded values. The detailed description of partial decision is presented previously in \equref{equ:partial_set}.

These algorithms collectively constitute the proposed SC decoder, which achieves time efficiency by fully leveraging the redundancy in computations across different decoding steps. It is important to note that the primary computational load arises from updates to the probability tensor arrays, performed within the $\texttt{stepTo}()$ function. At tree level $j$, this operation incurs a complexity of $O(N/2^j)$. As for the auxiliary function $\texttt{getPath}()$ which determines the traversal path of the head pointer, it has a complexity of at most $O(\log{N})$ and therefore does not impact the overall time complexity.

\begin{algorithm}[t!]
    \caption{$\texttt{initGraph}()$}
    \label{alg:initGraph}
    Allocate memory for the computational graph\;
    Set the \texttt{data} attribute of the first edge with the prior probability distribution of $X^{1:M}_1,\ldots,X^{1:M}_N$\;
    \For{$i=2$ \textbf{to} $2N-1$}{
        Set \texttt{data} of the $i$-th edge to be uniform\;
    }
    Set relationships between edges and vertices properly\;
\end{algorithm}

\begin{algorithm}[t!]
    \caption{$\texttt{decodeAt}(t)$}
    \label{alg:decodeAt}
    \KwIn{the current decoding step: $t$}
    Get the target index $\beta_t = \lfloor (i_t + N - 1) / 2 \rfloor$\;
    Generate $\texttt{path} \gets \texttt{getPath}(\beta_t)$\;
    \lForEach{$\beta$ \textbf{in} \texttt{path}}{$\texttt{stepTo}(\beta)$}
    \eIf{$\text{mod}(i_t, 2) = 1$} {
        Denote \texttt{decHead}.\texttt{left} as \texttt{leaf}\;
        Temporarily store \texttt{leaf}.\texttt{data} in \texttt{temp}\;
        Execute $\texttt{decHead}.\texttt{calcLeft}()$\;
    } {
        Denote \texttt{decHead}.\texttt{right} as \texttt{leaf}\;
        Temporarily store \texttt{leaf}.\texttt{data} in \texttt{temp}\;
        Execute $\texttt{decHead}.\texttt{calcRight}()$\;
    }
    \eIf{$U^{\gamma_t}_{i_t}$ is frozen} {
        Get the frozen value $u^{\gamma_t}_{i_t}$\;
    } {
        Combine \texttt{leaf}.\texttt{data} with \texttt{temp} using \equref{equ:nrmcomb}\;
        Estimate $u^{\gamma_t}_{i_t}$ based on \texttt{leaf}.\texttt{data}\;
    }
    Set \texttt{leaf}.\texttt{data} to be partially deterministic at $U^{\mathcal{J}(t+1, i_t)}_{i_t} = u^{\mathcal{J}(t+1, i_t)}_{i_t}$, where $\mathcal{J}$ is defined in \equref{equ:partial_set}\;
\end{algorithm}

\begin{algorithm}[t!]
    \caption{$\texttt{getPath}(\beta_t)$}
    \label{alg:getPath}
    \KwIn{Index of the target vertex: $\beta_t$}
    \KwOut{The resulting index array: \texttt{path}}
    Denote the current index of \texttt{decHead} as $\beta'$\;
    \While{$\beta' \neq \beta_t$} {
        \eIf{$\beta' < \beta_t$} {
            Append $\beta'$ to \texttt{path} from the head\;
            Let $\beta' \gets \lfloor \beta' / 2 \rfloor$\;
        } {
            Append $\beta_t$ to \texttt{path} from the tail\;
            Let $\beta_t \gets \lfloor \beta_t / 2 \rfloor$\;
        }
    }
    Append the common root of $\beta'$ and $\beta_t$ to \texttt{path}\;
\end{algorithm}

\begin{algorithm}[t!]
    \caption{$\texttt{stepTo}(\beta)$}
    \label{alg:stepTo}
    \KwIn{Index of the target vertex: $\beta$}
    Let the current index of \texttt{decHead} be $\beta'$\;
    \lIf{$\beta'$ is not a neighbor of $\beta$}{\Return}
    \eIf{$\beta' = \lfloor \beta / 2 \rfloor$} {
        \eIf{$\text{mod}(\beta, 2) = 0$} {
            Execute $\texttt{decHead}.\texttt{calcLeft}()$\;
            Swap $\texttt{decHead}$ and \texttt{decHead}.\texttt{left}.\texttt{from}\;
        } {
            Execute $\texttt{decHead}.\texttt{calcRight}()$\;
            Swap $\texttt{decHead}$ and \texttt{decHead}.\texttt{right}.\texttt{from}\;
        }
    } {
        Execute $\texttt{decHead}.\texttt{calcParent}()$\;
        Swap $\texttt{decHead}$ and \texttt{decHead}.\texttt{parent}.\texttt{from}\;
    }
\end{algorithm}

\subsection{Time-Efficient Decoder Forking}

From the preceding discussion, it is clear that the state of a directed edge, whether upward or downward, plays a critical role. Conceptually, each edge should have both a head and a tail vertex. In our implementation, however, only the \texttt{from} pointer is retained. This is because the computational graph is always maintained in a special state during decoding, where all directed edges point toward \texttt{decHead}. Under this condition, neither the explicit state nor the tail vertex of an edge needs to be stored.

This structural property brings a clear advantage. Starting from the vertex pointed to by \texttt{decHead}, one can traverse all its adjacent edges and, through each edge's \texttt{from} pointer, reach other vertices. By prohibiting any backtracking to previously visited vertices, this traversal recursively visits every vertex and edge in the computational graph. In this sense, \texttt{decHead} captures the complete internal state of the SC decoder.

Building on this fact, we can create two new head pointers based on an existing one, denoted as \texttt{decHead1} and \texttt{decHead2}, as shown in \figref{fig:exploreAt}. These correspond to two SC decoders that share the entire graph except for a single adjacent edge. Applying the above \texttt{from}-pointer traversal starting from \texttt{decHead1} and \texttt{decHead2} produces two decoder instances whose internal structures differ only in that one edge. Therefore, creating a new head pointer is equivalent to instantiating a new SC decoder, and such duplication can be achieved at constant computational cost.

This efficiency comes from avoiding global indexing for vertex and edge access. All access is done through local pointers rooted at \texttt{decHead}. This naturally leads to the concept of a \emph{logical computational graph}, uniquely determined by its head pointer. The graph data is not stored contiguously in memory but is organized through local pointer references, with large parts shared among different decoders.

\begin{figure}[t!]
    \centerline{\includegraphics[width=0.45\textwidth]{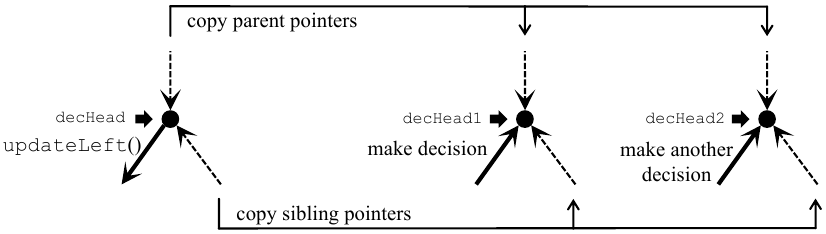}}
    \caption{Constant-time decoder forking based on head pointer.}
    \label{fig:exploreAt}
\end{figure}

\subsection{Time-Efficient SC List Decoding}

\subsubsection{An Initial Attempt}
In the proposed SC decoder, a complete computational graph is initialized, and all subsequent computation results are stored directly in the initially allocated memory. When extending this approach to list decoding, a minor adjustment is required: multiple SC decoders generated via the proposed decoder-forking strategy share the same memory space, which may lead to access conflicts.

A straightforward solution is to have $\texttt{stepTo}()$ always allocate and operate on fresh memory. Specifically, at each invocation of $\texttt{stepTo}()$, instead of overwriting the probability tensor of an existing edge, we allocate a new edge to store the updated value. Furthermore, rather than updating \texttt{decHead} to point directly to the next vertex, we employ the forking mechanism to generate a new vertex instance for the next vertex, and point \texttt{decHead} to it. The newly created vertex–edge pair is then embedded into the original graph, replacing the existing one. Given that the pointers to the old memory blocks are discarded, this approach ensures that each \texttt{decHead} instance exclusively references the most recent and valid data, thereby preserving decoding consistency. Moreover, since different head pointers are assigned to distinct new memory blocks, access conflicts between SC decoders are inherently avoided.

One remaining issue is that if discarded memories are not recycled promptly, the space complexity will match the time complexity. For monotone chains with time complexity up to $O(N^2)$, this leads to substantial and unnecessary memory overhead. A less obvious issue is that repeatedly releasing these blocks can cause severe memory fragmentation. Efficient memory management is therefore essential to overcome these limitations.

\subsubsection{Memory Management}
In the classical implementation proposed by \cite{7055304}, memory management relies on an explicit path-killing operation. In our design, we adopt a more compact and efficient scheme that eliminates the need for explicit memory release.

The principle is overwriting on existing space. During initialization, memory is preallocated for $L$ computational graphs, requiring $O(LN\log N)$ space in total. During decoding, let $L' \in \{1,\ldots,L\}$ denote the number of currently active SC decoders. At each invocation of $\texttt{stepTo}()$, we simply mark $L'$ of the preallocated graphs as writable and assign them to the $L'$ head pointers. Naturally, in implementing this strategy, some of the writable blocks may contain data that are still required at the current step; such data can be temporarily stored, and the additional space is only of $O(N)$. In essence, no new memory is allocated, as existing blocks are reused by directly overwriting their contents.

\subsubsection{Likelihoods Calculation}
After list decoding, the final decision is made by selecting the most likely candidate from the list of $L$ codewords. This requires computing the joint probability of the reconstructed variables, which can be derived using the chain rule:
\begin{equation}
    \begin{aligned}
         &\Pr(X^{1:M}_1 = x^{1:M}_1, \ldots, X^{1:M}_N = x^{1:M}_N) \\
         &= \prod_{t = 1}^{MN} \Pr(U^{\gamma_t}_{i_t} = u^{\gamma_t}_{i_t} \mid U^{\gamma_1}_{i_1} = u^{\gamma_1}_{i_1}, \ldots, U^{\gamma_{t-1}}_{i_{t-1}} = u^{\gamma_{t-1}}_{i_{t-1}}).
    \end{aligned}
\end{equation}

To ensure numerical stability, the likelihood is typically computed in the logarithmic domain. During decoding, we maintain a log-likelihood value for each \texttt{decHead} and update it incrementally. For convenience, we may use natural logarithms $\log(\cdot)$ for all $U^{\gamma_t}_{i_t}$.

\subsubsection{Get the Decoding Result}
Among the $L$ SC decoders, the one with the highest likelihood corresponds to the final decoding result. Although the decoded codeword can be reconstructed via the previously described pointer-based graph traversal, the complexity is of $O(N\log{N})$. Here we introduce a more efficient method with complexity $O(N)$.

At the end of decoding, the corresponding head pointer always points to the $(N-1)$-th vertex. We move it sequentially to the root vertex by $\texttt{stepTo}()$, and finally invoke $\texttt{calcParent}()$. The resulting tensors in the root edge yield the joint distributions of $X^{1:M}_1, \ldots, X^{1:M}_N$, which are deterministic. The values with probability one constitute the final decoding result $x^{1:M}_1,\ldots,x^{1:M}_N$.

\section{Numerical Results}

We first provide a numerical illustration of the complexity improvement achieved by the proposed constant-time decoder forking strategy. Consider a distributed source coding scenario with two correlated binary sources, whose joint distribution is [0.1286, 0.0175, 0.0175, 0.8364] for $X^{1,2}=00,01,10,11$. Experiments are performed along the corner chain $1,\ldots,1,2,\ldots,2$ (with SC decoding complexity $O(N \log N)$) using a list size of $L=2$ and an empty frozen set. Table~\ref{tab:tab01} reports the average runtime through 100 simulation rounds, in comparison with the classical lazy-copy strategy proposed in \cite{7055304}. During the experiments, no fast-pruning techniques~\cite{7339671,8669947} or binary-specific optimizations, such as the log-likelihood-based min-sum algorithm, were employed, as they are beyond the scope of this work. Our focus is on the relative complexity between the two strategies, not the exact values.

The results reveal a notable trend. For short blocklengths, the performance gain of the proposed scheme is not very significant, as both methods exhibit nearly linear runtime growth. For longer blocklengths, however, the proposed scheme continues to scale linearly, while the lazy-copy strategy incurs a much sharper increase. This is because the complexity introduced by decoder forking has a relatively small constant factor compared with the probability-computation part of decoding. Thus, the advantage of the proposed scheme becomes evident at moderate to large blocklengths.

\begin{table}[t!]
    \centering
    \caption{Runtime comparison (in seconds).}
    \label{tab:tab01}
    \resizebox{0.45\textwidth}{!}{
    \begin{tabular}{r|l l l}
        \toprule
        \addlinespace[0.5ex]
        Blocklength & Lazy-copy & \multicolumn{2}{l}{Proposed} \\
        \midrule
        \addlinespace[0.5ex]
        64      & $5.87\times 10^{-3}$ & $5.79\times 10^{-3}$ & $(-1.4\%)$ \\
        \addlinespace[0.3ex]
        256     & $2.46\times 10^{-2}$ & $2.41\times 10^{-2}$ & $(-2.0\%)$  \\
        \addlinespace[0.3ex]
        1024    & $1.03\times 10^{-1}$ & $9.49\times 10^{-2}$ & $(-5.4\%)$  \\
        \addlinespace[0.3ex]
        4096    & $4.29\times 10^{-1}$ & $3.75\times 10^{-1}$ & $(-12.6\%)$ \\
        \addlinespace[0.3ex]
        16384   & $2.51\times 10^{0}$  & $1.44\times 10^{0}$  & $(-42.6\%)$ \\
        \addlinespace[0.3ex]
        65536   & $2.02\times 10^{1}$  & $5.43\times 10^{0}$  & $(-73.1\%)$ \\
        \addlinespace[0.3ex]
        262144  & $2.55\times 10^{2}$  & $2.18\times 10^{1}$  & $(-91.5\%)$ \\
        \addlinespace[0.5ex]
        \bottomrule
    \end{tabular}}
\end{table}

We then evaluate the rate–distortion performance of SC and SC list decoding for monotone chain polar codes. Note that in distributed coding with $M$ terminals, the code rate is an $M$-dimensional vector $R_{1:M}$ rather than a single scalar. For clarity, we start from $H(X^{1:M})$ and increase the sum-rate while maintaining a fixed rate ratio across terminals. This ratio is determined by the chain rate defined in~\cite{6284254}:
\begin{equation*}
    R_{\gamma} = \frac{1}{N}\sum_{t=1,\gamma_t=\gamma}^{MN} H(U^{\gamma_t}_{i_t} | U^{\gamma_1}_{i_1},\ldots,U^{\gamma_{t-1}}_{i_{t-1}}),
\end{equation*}
for each terminal $\gamma = 1,\ldots,M$.

We consider a distributed source coding scenario with two correlated non-binary sources: a ternary source $X^1$ and a quinary source $X^2$, whose joint distribution is [0.0814, 0.6078, 0.0519, 0.0014, 0.0014, 0.0095, 0.0308, 0.0013, 0.0027, 0.0044, 0.0018, 0.0156, 0.0500, 0.0012, 0.1388] corresponding to $X^{1,2} = 00, 01, \ldots, 24$. We evaluate performance along two different monotone chains for short blocklength $N = 2^6$ and medium blocklength $N = 2^{10}$, and plot the block error rate (BLER) curves for list sizes $L = 1$ (SC decoding) and $L = 32$ (SC list decoding). Codes are constructed under 100 Monte-Carlo simulations. The results shown in \figref{fig:figure02} are obtained through 2000 experiments, where the two chains considered are the corner chain $1,\ldots,1,2,\ldots,2$ and a randomly generated monotone chain under $N=2^6$. While for $N=2^{10}$, we use the corresponding $4$-extension chain.

It can be observed that increasing the list size provides a substantial improvement in decoding performance, confirming that list decoding plays a critical role for monotone chain polar codes. Moreover, by comparing the two figures, it also reveals that better BLER performance might be achieved by other monotone chains than corner chains. Investigating the performance of monotone chain polar codes in general is therefore of considerable interest, further underscoring the importance of our work.

\begin{figure}[t!]
    \centering
    \subfloat[Corner chain.]{%
        \includegraphics[width=0.23\textwidth]{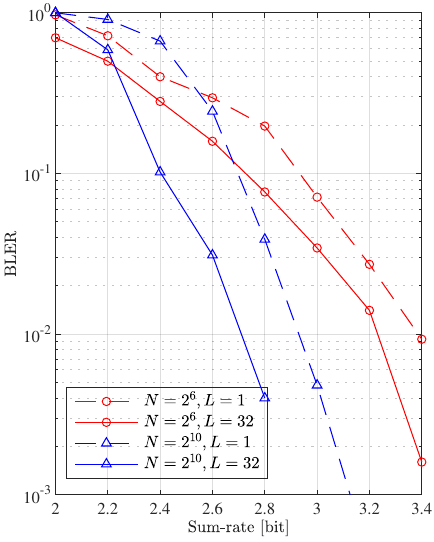}
    }
    \subfloat[Random chain.]{%
        \includegraphics[width=0.23\textwidth]{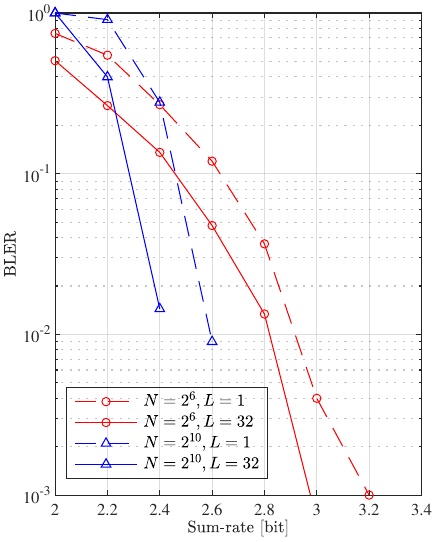}
    }
    \caption{BLER performance of SC and SC list decoding along difference monotone chains.}
    \label{fig:figure02}
\end{figure}

\section{Conclusion}

In this work, we investigated monotone chain polar codes and their SC decoding problems. We first established a unified SC decoding framework applicable to arbitrary numbers of terminals, arbitrary alphabets, and arbitrary monotone chains, and then provided a detailed complexity analysis. By introducing the concept of the computational graph, we showed that the time complexity of SC decoding along monotone chains can range from $O(N \log N)$ to $O(N^2)$, depending on the specific chain structure, and clarified the inherent limitations of classical lazy-copy strategy based on space-efficient techniques. Building upon this framework, we proposed efficient data structures and a constant-time decoder forking strategy, enabling time-efficient SC list decoding for general monotone chains. We believe that the tools developed in this work provide a foundation for further theoretical studies and practical applications of monotone chain polar codes.

\bibliographystyle{IEEEtran}
\bibliography{refs}

\begin{thebibliography}{10}
\providecommand{\url}[1]{#1}
\csname url@samestyle\endcsname
\providecommand{\newblock}{\relax}
\providecommand{\bibinfo}[2]{#2}
\providecommand{\BIBentrySTDinterwordspacing}{\spaceskip=0pt\relax}
\providecommand{\BIBentryALTinterwordstretchfactor}{4}
\providecommand{\BIBentryALTinterwordspacing}{\spaceskip=\fontdimen2\font plus
\BIBentryALTinterwordstretchfactor\fontdimen3\font minus \fontdimen4\font\relax}
\providecommand{\BIBforeignlanguage}[2]{{%
\expandafter\ifx\csname l@#1\endcsname\relax
\typeout{** WARNING: IEEEtran.bst: No hyphenation pattern has been}%
\typeout{** loaded for the language `#1'. Using the pattern for}%
\typeout{** the default language instead.}%
\else
\language=\csname l@#1\endcsname
\fi
#2}}
\providecommand{\BIBdecl}{\relax}
\BIBdecl

\bibitem{5075875}
E.~Arıkan, ``Channel polarization: A method for constructing capacity-achieving codes for symmetric binary-input memoryless channels,'' \emph{IEEE Trans. Inf. Theory}, vol.~55, no.~7, pp. 3051--3073, Jul. 2009.

\bibitem{5513567}
------, ``Source polarization,'' in \emph{IEEE Int. Symp. Inf. Theory}, Jun. 2010, pp. 899--903.

\bibitem{5437372}
S.~B. Korada and R.~L. Urbanke, ``Polar codes are optimal for lossy source coding,'' \emph{IEEE Trans. Inf. Theory}, vol.~56, no.~4, pp. 1751--1768, Apr. 2010.

\bibitem{6874846}
A.~G. Sahebi and S.~S. Pradhan, ``Polar codes for some multi-terminal communications problems,'' in \emph{IEEE Int. Symp. Inf. Theory}, Jun. 2014, pp. 316--320.

\bibitem{1055037}
D.~Slepian and J.~Wolf, ``Noiseless coding of correlated information sources,'' \emph{IEEE Trans. Inf. Theory}, vol.~19, no.~4, pp. 471--480, Jul. 1973.

\bibitem{5454148}
E.~Abbe and E.~Telatar, ``{MAC} polar codes and matroids,'' in \emph{2010 Inform. Theory and App. Workshop}, Jan. 2010, pp. 1--8.

\bibitem{5503184}
E.~Şaşoğlu, E.~Telatar, and E.~Yeh, ``Polar codes for the two-user binary-input multiple-access channel,'' in \emph{IEEE Inf. Theory Workshop Inf. Theory}, Jan. 2010, pp. 1--5.

\bibitem{6208869}
E.~Abbe and E.~Telatar, ``Polar codes for the $m$-user multiple access channel,'' \emph{IEEE Trans. Inf. Theory}, vol.~58, no.~8, pp. 5437--5448, Aug. 2012.

\bibitem{6284254}
E.~A. Bilkent, ``Polar coding for the {Slepian-Wolf} problem based on monotone chain rules,'' in \emph{IEEE Int. Symp. Inf. Theory}, Jul. 2012, pp. 566--570.

\bibitem{6620401}
S.~Önay, ``Successive cancellation decoding of polar codes for the two-user binary-input {MAC},'' in \emph{IEEE Int. Symp. Inf. Theory}, Jul. 2013, pp. 1122--1126.

\bibitem{7282710}
S.~Salamatian, M.~Médard, and E.~Telatar, ``A successive description property of monotone-chain polar codes for {Slepian-Wolf} coding,'' in \emph{IEEE Int. Symp. Inf. Theory}, Jun. 2015, pp. 1522--1526.

\bibitem{613189}
B.~Rimoldi and R.~Urbanke, ``Asynchronous {Slepian-Wolf} coding via source-splitting,'' in \emph{IEEE Int. Symp. Inf. Theory}, Jun. 1997, pp. 271--.

\bibitem{7055304}
I.~Tal and A.~Vardy, ``List decoding of polar codes,'' \emph{IEEE Trans. Inf. Theory}, vol.~61, no.~5, pp. 2213--2226, May 2015.

\bibitem{quantum}
\BIBentryALTinterwordspacing
C.~Hirche, ``Polar codes in quantum information theory,'' \emph{arXiv preprint arXiv:1501.03737}, Jan. 2015. [Online]. Available: \url{https://arxiv.org/abs/1501.03737}
\BIBentrySTDinterwordspacing

\bibitem{10578043}
Y.~Fang, ``Bridging hamming distance spectrum with coset cardinality spectrum for overlapped arithmetic codes,'' \emph{IEEE Trans. Inf. Theory}, vol.~70, no.~9, pp. 6714--6745, Sept. 2024.

\bibitem{5351487}
E.~Şaşoğlu, E.~Telatar, and E.~Arıkan, ``Polarization for arbitrary discrete memoryless channels,'' in \emph{IEEE Inf. Theory Workshop}, Oct. 2009, pp. 144--148.

\bibitem{6283740}
E.~Şaşoğlu, ``Polar codes for discrete alphabets,'' in \emph{IEEE Int. Symp. Inf. Theory}, Jul. 2012, pp. 2137--2141.

\bibitem{7456294}
R.~Nasser and E.~Telatar, ``Polar codes for arbitrary {DMCs} and arbitrary {MACs},'' \emph{IEEE Trans. Inf. Theory}, vol.~62, no.~6, pp. 2917--2936, Jun. 2016.

\bibitem{stack}
K.~Niu and K.~Chen, ``Stack decoding of polar codes,'' \emph{Electron. Lett.}, vol.~48, no.~12, pp. 695--697, Jun. 2012.

\bibitem{7094848}
O.~Afisiadis, A.~Balatsoukas-Stimming, and A.~Burg, ``A low-complexity improved successive cancellation decoder for polar codes,'' in \emph{Asilomar Conf. Signals, Syst. Comput.}, Nov. 2014, pp. 2116--2120.

\bibitem{6560025}
K.~Chen, K.~Niu, and J.~Lin, ``Improved successive cancellation decoding of polar codes,'' \emph{IEEE Trans. Commun.}, vol.~61, no.~8, pp. 3100--3107, Aug. 2013.

\bibitem{6355936}
B.~Li, H.~Shen, and D.~Tse, ``An adaptive successive cancellation list decoder for polar codes with cyclic redundancy check,'' \emph{IEEE Commun. Lett.}, vol.~16, no.~12, pp. 2044--2047, Dec. 2012.

\bibitem{6297420}
K.~Niu and K.~Chen, ``{CRC}-aided decoding of polar codes,'' \emph{IEEE Commun. Lett.}, vol.~16, no.~10, pp. 1668--1671, Oct. 2012.

\bibitem{7862172}
Q.~Zhang, A.~Liu, X.~Pan, and K.~Pan, ``{CRC} code design for list decoding of polar codes,'' \emph{IEEE Commun. Lett.}, vol.~21, no.~6, pp. 1229--1232, Jun. 2017.

\bibitem{7339671}
G.~Sarkis, P.~Giard, A.~Vardy, C.~Thibeault, and W.~J. Gross, ``Fast list decoders for polar codes,'' \emph{IEEE J. Sel. Areas Commun.}, vol.~34, no.~2, pp. 318--328, Feb. 2016.

\bibitem{8669947}
M.~H. Ardakani, M.~Hanif, M.~Ardakani, and C.~Tellambura, ``Fast successive-cancellation-based decoders of polar codes,'' \emph{IEEE Trans. Commun.}, vol.~67, no.~7, pp. 4562--4574, Jul. 2019.

\bibitem{9770084}
M.~C. Coşkun and H.~D. Pfıster, ``An information-theoretic perspective on successive cancellation list decoding and polar code design,'' \emph{IEEE Trans. Inf. Theory}, vol.~68, no.~9, pp. 5779--5791, Sept. 2022.

\bibitem{8396299}
A.~Elkelesh, M.~Ebada, S.~Cammerer, and S.~ten Brink, ``Belief propagation list decoding of polar codes,'' \emph{IEEE Commun. Lett.}, vol.~22, no.~8, pp. 1536--1539, Aug. 2018.

\bibitem{9174118}
H.~Yao, A.~Fazeli, and A.~Vardy, ``List decoding of {Arıkan's} {PAC} codes,'' in \emph{IEEE Int. Symp. Inf. Theory}, Jun. 2020, pp. 443--448.

\bibitem{9328621}
M.~Rowshan, A.~Burg, and E.~Viterbo, ``Polarization-adjusted convolutional ({PAC}) codes: Sequential decoding vs list decoding,'' \emph{IEEE Trans. Veh. Technol.}, vol.~70, no.~2, pp. 1434--1447, Feb. 2021.

\bibitem{6852102}
K.~Niu, K.~Chen, J.~Lin, and Q.~T. Zhang, ``Polar codes: Primary concepts and practical decoding algorithms,'' \emph{IEEE Commun. Mag.}, vol.~52, no.~7, pp. 192--203, Jul. 2014.

\bibitem{butterfly}
\BIBentryALTinterwordspacing
E.~Arıkan, ``Entropy polarization in butterfly transforms,'' \emph{Digit. Signal Process.}, vol. 119, p. 103207, Dec. 2021. [Online]. Available: \url{https://www.sciencedirect.com/science/article/pii/S1051200421002463}
\BIBentrySTDinterwordspacing

\bibitem{understand}
J.~S. Yedidia, W.~T. Freeman, and Y.~Weiss, ``Understanding belief propagation and its generalizations,'' in \emph{Exploring Artificial Intelligence in the New Millennium}, G.~Lakemeyer and B.~Nebel, Eds.\hskip 1em plus 0.5em minus 0.4em\relax San Francisco, CA, USA: Morgan Kaufmann, 2003, pp. 239--269.

\end{thebibliography}

\end{document}